\numberwithin{equation}{section}
\numberwithin{theorem}{section}
\numberwithin{lemma}{section}
\begin{document}

\renewcommand{\PaperNumber}{***}

\FirstPageHeading

\ShortArticleName{New Fourier Transform and Their Applications}

\ArticleName{New Fourier Transform Containing a Pair of Complex Euler Gamma Functions With a Monomial: Mathematical and Physical Applications}




\Author{Sid-Ahmed YAHIAOUI, Othmane CHERROUD and Mustapha BENTAIBA}
\AuthorNameForHeading{S.-A. Yahiaoui, O. Cherroud and M. Bentaiba}
\Address{LPTHIRM, D\'epartement de physique, Facult\'e des sciences, Universit\'e Sa\^ad DAHLAB-Blida~1, B.P.~270 Route de Soum\^aa, 09\:000 Blida, Algeria}
\Email{\href{mailto:email@address}{s$\_$yahiaoui@univ-blida.dz}, \href{mailto:email@address}{Cherroud.Othmane@univ-medea.dz}, \href{mailto:email@address}{bentaiba@univ-blida.dz}}


\Abstract{One of the goals of the present paper is to propose an elementary method to find a general formula for a new Fourier transform containing a pair of complex Euler gamma functions with a monomial $s^m$ in terms of Gauss's hypergeometric functions $_2F_1$. We further collect some mathematical results that follow by means of this transform. Physical applications in quantum mechanics requiring the expectation values of position and momentum operators for a quantum system endowed with position-dependent effective mass are presented.}

\Keywords{Fourier transforms; Fa\`a di Bruno's formula; Bernoulli and Euler numbers and polynomials; Wigner functions}

\Classification{42A38; 81S30; 58D30; 11B68}

\section{Introduction}%

\noindent The gamma function, introduced by Euler in the eighteenth century, is constantly found in diverse branches of mathematics and physics. The reason is that it has extremely varied properties and its importance lies from its usefulness in connecting to other different functions, such as Bessel and Legendre functions, (confluent-) hypergeometric functions and orthogonal polynomials, (see for example \cite{1}). Since Euler, almost every mathematician had brought a personal contribution to study such a function and till today all our knowledge on this subject is particularly rich and abundant \cite{2}.\\
\indent In this context, we were faced in our paper \cite{3} with an integral of the type
\begin{eqnarray}\label{1.1}
  \mathcal F^{(\alpha,\beta)}_m(0) &=& \int_{-\infty}^{+\infty} s^m\Gamma(\alpha-is)\Gamma(\beta+is)\,ds,
  \qquad(\alpha,\beta>0,\,s\in\mathbb R)
\end{eqnarray}
where $m=0,1,2,\cdots$, in order to evaluate the expectation values of position and momentum operators for a quantum system endowed with position-dependent effective mass, using to this end the Wigner's function in conjunction with the Weyl's transform in the framework of the phase-space quantum mechanics.\\
\indent However, the first author of this paper was surprised to see that the integral in \eqref{1.1}, as well as its more general version containing a pair of complex Euler gamma functions with weight measure $\{e^{-i\lambda s}s^m\}_{m\in\mathbb N,\,\lambda\in\mathbb R}$, namely
\begin{eqnarray}\label{1.2}
  \mathcal F^{(\alpha,\beta)}_m(\lambda) &=& \int_{-\infty}^{+\infty}e^{-i\lambda s} s^m\Gamma(\alpha-is)\Gamma(\beta+is)\,ds,\qquad(\lambda\in\mathbb R)
\end{eqnarray}
does not have known precise value, and thus does not appear anywhere in the different handbooks and monographs such as those of Gradshteyn and Ryzhik \cite{4}, Jeffrey \cite{5} and Brychkov \cite{6}. Even software systems able to perform mathematical operations such as \textsc{Mathematica} and \textsc{Maple} do not give any results. In sight of these observations, we have believed that the main reason of this omission was that \eqref{1.2} {\it does not absolutely converge} as required by the Fourier transform, since indeed the integral \eqref{1.2} can be seen as a Fourier transform of the function $f_m^{(\alpha,\beta)}(s)=s^m\Gamma(\alpha-is)\Gamma(\beta+is)$,
\begin{eqnarray}\label{1.3}
  \mathfrak F\left[f_m^{(\alpha,\beta)}(s)\right](\lambda) &\equiv& \mathcal F^{(\alpha,\beta)}_m(\lambda) =
                   \int_{-\infty}^{+\infty}e^{-i\lambda s}\,f_m^{(\alpha,\beta)}(s)\,ds,
\end{eqnarray}
as well as the $m$th derivative, with respect to the parameter $\lambda$, of the Fourier transform of the function $g^{(\alpha,\beta)}(s)=\Gamma(\alpha-is)\Gamma(\beta+is)$,
\begin{eqnarray}\label{1.4}
  \frac{d^m}{d\lambda^m}\,\mathfrak F\left[g^{(\alpha,\beta)}(s)\right](\lambda) &=& (-i)^m\,\mathfrak F\left[s^m g^{(\alpha,\beta)}(s)\right](\lambda).
\end{eqnarray}
\indent However, as is well known, there are functions that are not absolutely integrable (i.e., convergent) but have a Fourier transform satisfying \eqref{1.3}. One important example is given by the function $\sin(\lambda_0s)/s$. It is the purpose of this paper to propose a method of derivation in order to deduce a general formula for integrals in \eqref{1.2} in terms of Gauss's hypergeometric function $_2F_1$ that reduces to a polynomial of order $c=-M$, ($M\in\mathbb N$) which, as far as we know, have not been evaluated yet. The result we obtain was presented elsewhere in the appendix B of \cite{3}, while in this paper we provide a setting within which the general solution is straightforwardly derived using the Fa\`a di Bruno formula.
\paragraph{1.1. Fourier transform.} The Fourier transform is certainly one of the most useful of the unlimited number of possible integral transforms \cite{7}. Its importance has been enhanced by the development of generalizations extending the set of functions that can be Fourier transformable. For instance, the above original function $f_m^{(\alpha,\beta)}(s)\in L^1(\mathbb R)$ is Fourier transformable if and only if the integral \eqref{1.3} exists \cite{7,8}.\\
\indent Then we say that $\mathcal F_m^{(\alpha,\beta)}(\lambda)$ is:
\begin{enumerate}
  \item bounded,
  \item continuous on $\mathbb R$, and
  \item tends to zero for $\lambda\rightarrow\pm\infty$, (i.e., $\lim_{\lambda\rightarrow\pm\infty}\mathcal F_m^{(\alpha,\beta)}(\lambda)=0$).
\end{enumerate}

\indent The existence and bounded-ness of $\mathcal F_m^{(\alpha,\beta)}(\lambda)$ follow from the inequality
\begin{eqnarray}\label{1.5}
  |\mathcal F_m^{(\alpha,\beta)}(\lambda)| &\leq& \int_\mathbb R|f_m^{(\alpha,\beta)}(s)|\,ds,
\end{eqnarray}
and is a sufficient condition for that the restrictions (2) and (3) hold.
\paragraph{1.2. Fa\`a di Bruno's formula.} The Fa\`a di Bruno formula is applied to a composite function $f\circ g(x)=f(g(x))$ in order to compute its $m$th derivative in terms of the derivatives of $f$ and $g$, and may be written in the form (see, \cite{9} for an excellent survey and references therein)
\begin{equation}\label{1.6}
  \frac{d^m}{dx^m}\,f\circ g(x) = \sum_{M\in\sigma}\frac{m!}{i_1!\,i_2!\cdots i_m!}\cdot
  \left(\frac{d^M f}{dg^M}\circ g(x)\right)\cdot\prod_{j=1}^m\left(\frac{g^{(j)}(x)}{j!}\right)^{i_j},
\end{equation}
where the summation ranges over the set $\sigma$ of all partitions of $m$, that is over all different solutions in non negative integers $(i_1,i_2,\cdots,i_m)$ of the equations $i_1+2i_2+\cdots+mi_m=m$ and $i_1+i_2+\cdots+i_m=M$, with $M=1,2,\cdots,m$.\\
\indent For instance, for $m=3$ there are three possible sequences of such integers: $(i_1,i_2,i_3)=\{(0,0,1),(1,1,0),(3,0,0)\}$, while the case for $m=4$ can be partitioned in five distinct ways: $(i_1,i_2,i_3,i_4)=\{(0,0,0,1),(1,0,1,0),(0,2,0,0),(2,1,0,0),(4,0,0,0)\}$ (see Table~1 below). In number theory and combinatorics \cite{10}, the solution $(i_1,i_2,\cdots,i_m)$ is called a partition of $m$ with $M$ parts. For $1\leq M\leq m$, the set of all partition $m$ into $M$ is denoted by the subset $\sigma(M)$, where $\sigma=\bigcup_{M=1}^m\sigma(M)$.
\paragraph{1.3. Outline.} The remainder of the paper is as follows. In section 2, we establish a new general Fourier integral formula relating large classes of integrals. The focus of sections 3 and 4 is on the new results that follow from the general integral for special values of the parameters. Further results are examined, amongst them we can recover some old-known results and others are new. For example, we succeed to obtain another and new representation of the Bernoulli and Euler numbers and polynomials different from those known in the literature. We also established the expressions for monomials $\beta^m$, residues of the gamma function and the associated Laguerre polynomials in terms of the discrete summation of gamma functions. In section 5, in connection with our paper \cite{3}, we present a direct application involving the evaluation of expectation values of position and momentum operators. Finally, as is custom, the last section will be devoted to our conclusion.

\section{New general Fourier integral}%

\noindent Before we state our main result, we first investigate the possibility of convergence of \eqref{1.2}. One immediate property of the gamma function is in the use of the relation (see, for example exercises 13.1.16 and 13.1.18 in \cite{1})
\begin{eqnarray}\label{2.1}
  |\Gamma(\xi\pm i\eta)| &=&
  \Gamma(\xi)\prod_{p=0}^\infty\frac{1}{\sqrt{1+\frac{\eta^2}{(p+\xi)^2}}} \leq \Gamma(\xi),
\end{eqnarray}
applied for all values $\Re(\xi)>0$ and $\eta\in\mathbb R$. Inserting the inequality in \eqref{2.1} into \eqref{1.5}, we get
\begin{eqnarray}\label{2.2}
  |\mathcal F^{(\alpha,\beta)}_m(\lambda)|
     &=&    \left|\int_{-\infty}^{+\infty}e^{-i\lambda s} s^m\Gamma(\alpha-is)\Gamma(\beta+is)\,ds\right| \nonumber \\
     &\leq& \int_{-\infty}^{+\infty}\left|e^{-i\lambda s} s^m\Gamma(\alpha-is)\Gamma(\beta+is)\right|ds   \nonumber \\
     &\leq& 2\Gamma(\alpha)\Gamma(\beta)\int_{0}^{+\infty}s^m\,d s,
\end{eqnarray}
where the integral considered in the r.h.s of \eqref{2.2} is certainly divergent for all $m=0,1,2,\cdots$, and thus $|\mathcal F^{(\alpha,\beta)}_m(\lambda)|$ is manifestly bounded by \eqref{1.5}. We emphasize that \eqref{1.5} is sufficient and necessary to claim that the function $f_m^{(\alpha,\beta)}(s)=s^m\Gamma(\alpha-i s)\Gamma(\beta+i s)$ has a Fourier transform, under the conditions that $f_m^{(\alpha,\beta)}(s)$ is analytic and does not have any poles on the real axis.\\
\indent In the sequel, we present our general theorem concerning the computation of \eqref{1.2} which can be regarded as mostly new.
\newtheorem{Theorem}{Theorem}%
\begin{Theorem}
Let $f_m^{(\alpha,\beta)}(s)=s^m\Gamma(\alpha-is)\Gamma(\beta+is)$ denotes an analytic function in $s\in\mathbb R$, with $\alpha>0$ and $\beta>0$, and let $\mathcal F^{(\alpha,\beta)}_m(\lambda)$ be a Fourier transform of the function $f_m^{(\alpha,\beta)}(s)$ defined by \eqref{1.2}. Then
\begin{eqnarray}\label{2.3}
\mathcal F^{(\alpha,\beta)}_m(\lambda)
   &=& \int_{-\infty}^{+\infty}e^{-i\lambda s} s^m\Gamma(\alpha-is)\Gamma(\beta+is)\,ds \nonumber \\
   &=& 2\pi\,(-i)^m\,m!\,\frac{\Gamma(\alpha+\beta)}{\Gamma(\beta)}\frac{\Lambda^\beta}{(1+\Lambda)^{\alpha+\beta}} \nonumber \\
   && \times\sum_{M\in\sigma}\frac{(-1)^M\,\Gamma(\beta+M)}{\prod_{\nu=1}^m i_\nu!\,(\nu!)^{i_\nu}}\,{_2}F_1\left(-M,\alpha+\beta;\beta;\frac{\Lambda}{1+\Lambda}\right),
\end{eqnarray}
where $\Lambda=e^\lambda>0$ and here $i_k$, $m$ and $M$ are all non negative integers. The summation in \eqref{2.3} includes all different combinations of $i_k$ and satisfies simultaneously the equations
\begin{eqnarray}
  m &=& i_1+2i_2+3i_3+\cdots+mi_m, \label{2.4}\\
  M &=& i_1+i_2+i_3+\cdots+i_m. \label{2.5}
\end{eqnarray}
\end{Theorem}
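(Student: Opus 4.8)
The plan is to reduce the computation to the case $m=0$ and then obtain every $\mathcal F_m^{(\alpha,\beta)}$ by differentiating $m$ times in $\lambda$, unfolding the $m$-th derivative of a composite function by the Fa\`a di Bruno formula \eqref{1.6}. The first thing I would settle is integrability: \eqref{2.1} combined with the classical asymptotics $|\Gamma(\xi+i\eta)|\sim\sqrt{2\pi}\,|\eta|^{\xi-1/2}e^{-\pi|\eta|/2}$ shows that $f_m^{(\alpha,\beta)}(s)=s^m\Gamma(\alpha-is)\Gamma(\beta+is)$ decays like $|s|^{m+\alpha+\beta-1}e^{-\pi|s|}$ as $|s|\to\infty$, so $f_m^{(\alpha,\beta)}\in L^1(\mathbb R)$ for all $m$. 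Hence $\mathcal F_m^{(\alpha,\beta)}(\lambda)$ exists, is bounded and continuous, and the differentiation under the integral sign in \eqref{1.4} is legitimate, giving $\mathcal F_m^{(\alpha,\beta)}(\lambda)=(-i)^{-m}\frac{d^m}{d\lambda^m}\mathcal F_0^{(\alpha,\beta)}(\lambda)$.

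Next I would evaluate $\mathcal F_0^{(\alpha,\beta)}(\lambda)$ in closed form. Since $\alpha,\beta>0$, the Beta integral gives $\Gamma(\alpha-is)\Gamma(\beta+is)=\Gamma(\alpha+\beta)\int_0^1 u^{\alpha-is-1}(1-u)^{\beta+is-1}\,du$, and the successive substitutions $u=1/(1+t)$ and $t=e^{\mu}$ recast this as $\Gamma(\alpha+\beta)\int_{-\infty}^{+\infty}e^{is\mu}\,e^{\beta\mu}(1+e^{\mu})^{-(\alpha+\beta)}\,d\mu$; that is, $\Gamma(\alpha-is)\Gamma(\beta+is)$ is $\Gamma(\alpha+\beta)$ times the value at $-s$ of the Fourier transform of $h(\mu)=e^{\beta\mu}(1+e^{\mu})^{-(\alpha+\beta)}$. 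Since both $h$ and its transform lie in $L^1(\mathbb R)$, applying $\mathfrak F$ and using Fourier inversion gives
\[
  \mathcal F_0^{(\alpha,\beta)}(\lambda)=2\pi\,\Gamma(\alpha+\beta)\,h(\lambda)=2\pi\,\Gamma(\alpha+\beta)\,\frac{\Lambda^{\beta}}{(1+\Lambda)^{\alpha+\beta}},\qquad\Lambda=e^{\lambda},
\]
which is exactly \eqref{2.3} in the degenerate case $m=0$ (the only partition of $0$ being the empty one, with $M=0$ and ${_2}F_1(0,\alpha+\beta;\beta;\cdot)=1$). The same closed form can alternatively be recovered by closing the contour in \eqref{1.2} with $m=0$ and summing the residues of one of the two $\Gamma$-factors, which produces $\sum_n\frac{(\alpha+\beta)_n}{n!}(-\Lambda)^n=(1+\Lambda)^{-(\alpha+\beta)}$.

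I would then apply \eqref{1.6} to the $m$-th derivative of $\lambda\mapsto\phi(e^{\lambda})$, where $\phi(\Lambda)=2\pi\Gamma(\alpha+\beta)\,\Lambda^{\beta}(1+\Lambda)^{-(\alpha+\beta)}$, choosing the inner map to be $g(\lambda)=e^{\lambda}$. The virtue of this choice is that $g^{(j)}(\lambda)=e^{\lambda}=\Lambda$ for every $j\ge1$, so the product over $j$ in \eqref{1.6} collapses: $\prod_{j=1}^{m}(\Lambda/j!)^{i_j}=\Lambda^{M}\big(\prod_{\nu=1}^{m}(\nu!)^{i_\nu}\big)^{-1}$ with $M=i_1+\dots+i_m$. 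This reproduces verbatim the summation weight $\big(\prod_{\nu=1}^{m}i_\nu!(\nu!)^{i_\nu}\big)^{-1}$ and the constraints \eqref{2.4}--\eqref{2.5} appearing in \eqref{2.3}, and reduces the problem to computing $\Lambda^{M}\phi^{(M)}(\Lambda)$ for $0\le M\le m$.

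The last step, which I expect to be the main obstacle, is to cast $\phi^{(M)}(\Lambda)=2\pi\Gamma(\alpha+\beta)\,\frac{d^{M}}{d\Lambda^{M}}\big[\Lambda^{\beta}(1+\Lambda)^{-(\alpha+\beta)}\big]$ as a Gauss hypergeometric. Expanding by the Leibniz rule with $\frac{d^{k}}{d\Lambda^{k}}\Lambda^{\beta}=\frac{\Gamma(\beta+1)}{\Gamma(\beta-k+1)}\Lambda^{\beta-k}$ and $\frac{d^{\ell}}{d\Lambda^{\ell}}(1+\Lambda)^{-(\alpha+\beta)}=(-1)^{\ell}(\alpha+\beta)_{\ell}(1+\Lambda)^{-(\alpha+\beta)-\ell}$, factoring out $\Lambda^{\beta-M}(1+\Lambda)^{-(\alpha+\beta)}$ and introducing the variable $z=\Lambda/(1+\Lambda)$, the finite sum becomes a terminating Gauss series with numerator parameters $-M$ and $\alpha+\beta$ in the argument $z$, up to an explicit ratio of $\Gamma$-functions. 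A transformation of terminating hypergeometric series (reversal of the order of summation, equivalently a Pfaff/Euler transformation) then recasts it in the $(\alpha+\beta,\beta)$-parametrization displayed in \eqref{2.3}, while the factor $\Lambda^{M}$ supplied by \eqref{1.6} cancels the $\Lambda^{-M}$ and restores the common prefactor $\Lambda^{\beta}(1+\Lambda)^{-(\alpha+\beta)}$; collecting the constants from $\mathcal F_0^{(\alpha,\beta)}$, from $(-i)^{-m}$ and from the $m!$ in \eqref{1.6} then produces \eqref{2.3}. The delicate point is precisely this identification: the ${_2}F_1$ that drops out of the Leibniz expansion carries a shifted third parameter and a $\Gamma$-prefactor different from those literally written in \eqref{2.3}, so one must invoke exactly the right terminating-series transformation and keep very careful track of the Pochhammer symbols and signs; by contrast the convergence discussion, the base case, and the Fa\`a di Bruno bookkeeping are comparatively routine.
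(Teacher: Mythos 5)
Your overall strategy is sound and, in places, cleaner than the paper's: the Stirling-type decay $|\Gamma(\alpha-is)\Gamma(\beta+is)|\sim 2\pi|s|^{\alpha+\beta-1}e^{-\pi|s|}$ does put $f_m^{(\alpha,\beta)}$ in $L^1(\mathbb R)$ (the paper's own bound \eqref{2.2} is divergent and unconvincing on this point), your Beta-integral/Fourier-inversion evaluation of the $m=0$ case correctly reproduces \eqref{3.8}, and applying Fa\`a di Bruno to $\lambda\mapsto\phi(e^\lambda)$ with inner map $e^\lambda$ does yield exactly the weights $\bigl(\prod_\nu i_\nu!(\nu!)^{i_\nu}\bigr)^{-1}$ and the constraints \eqref{2.4}--\eqref{2.5}. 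This is a genuinely different route from the paper, which instead inserts Euler's integral for both gamma factors, writes the $s$-integral as the $m$th derivative of a Dirac delta, applies Fa\`a di Bruno to $\delta[\rho(r)]$, and finishes with G\&R {\bf 7.621}\,(4) and a Pfaff transformation.

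However, the step you yourself flag as delicate is a genuine gap, and as formulated it fails. Your plan needs, \emph{termwise in} $M$, the identity
\begin{equation*}
\Lambda^M\frac{d^M}{d\Lambda^M}\Bigl[\Lambda^\beta(1+\Lambda)^{-\alpha-\beta}\Bigr]
\;=\;(-1)^{m+M}\,\frac{\Gamma(\beta+M)}{\Gamma(\beta)}\,\frac{\Lambda^\beta}{(1+\Lambda)^{\alpha+\beta}}\,{_2}F_1\!\left(-M,\alpha+\beta;\beta;\frac{\Lambda}{1+\Lambda}\right),
\end{equation*}
whereas the Leibniz expansion actually gives $\frac{\Lambda^\beta}{(1+\Lambda)^{\alpha+\beta}}\,\frac{\Gamma(\beta+1)}{\Gamma(\beta+1-M)}\,{_2}F_1\bigl(-M,\alpha+\beta;\beta+1-M;z\bigr)$ with $z=\Lambda/(1+\Lambda)$. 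These are different polynomials in $z$ (for $M=2$ the constant terms are $\beta(\beta-1)$ versus $\beta(\beta+1)$), and the required sign $(-1)^{m+M}$ depends on $m$ while your left-hand side does not: the $M=1$ term of \eqref{2.3} flips sign between $m=1$ and $m=2$, but $\Lambda\,\psi'(\Lambda)$ obviously does not. Hence no terminating-series transformation (Pfaff, Euler, or reversal of summation) can carry out the identification term by term; the equality with \eqref{2.3} holds only after summing over all partitions of $m$ (for $m=2$ the discrepancies at $M=1$ and $M=2$ cancel exactly). So your argument does produce a correct closed form for $\mathcal F_m^{(\alpha,\beta)}(\lambda)$, but to reach the theorem's stated form you still need a nontrivial cross-$M$ resummation identity, or the paper's device: there the factor $r^{\beta-1}$ is kept inside before integrating against $\delta^{(M)}$, so the $M$th derivative acts on $e^{-\Lambda t}t^{\beta+M-1}$, which is precisely what generates $\Gamma(\beta+M)\,{_1}F_1(\beta+M;\beta;-\Lambda t)$ and, after the Laplace integral and Pfaff transformation, the termwise ${_2}F_1(-M,\alpha+\beta;\beta;z)$ of \eqref{2.3}.
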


\indent Before indicating the proof of this theorem, we report below in Table~1 different combinations of $i_k$, for even and odd $m$ up to 5, over all solutions in non negative integers of the equations \eqref{2.4} and \eqref{2.5}.

\begin{table}[h]
\caption{Different combinations of $i_k$ ($k=\overline{1,5}$), for even and odd $m$ up to 5, over all solutions in non negative integers of the equations \eqref{2.4} and \eqref{2.5}. By convention, we assume that for $m=M=0$ all $i_k=0$, $k=1,2,\cdots,m$.}\label{table1}
\centering
\begin{tabular}{@{}*{19}{c}}
\hline\hline\noalign{\smallskip}
$m$&\multicolumn{1}{l}{1}&\multicolumn{2}{l}{2}&\multicolumn{3}{l}{3}&\multicolumn{5}{l}{4}&\multicolumn{7}{l}{5}\\
\hline
$i_1$  &1 &2&0 &3&1&0 &4&2&1&0&0 &5&3&2&1&1&0&0 \\
$i_2$  &  &0&1 &0&1&0 &0&1&0&2&0 &0&1&0&2&0&1&0 \\
$i_3$  &  & &  &0&0&1 &0&0&1&0&0 &0&0&1&0&0&1&0 \\
$i_4$  &  & &  & & &  &0&0&0&0&1 &0&0&0&0&1&0&0 \\
$i_5$  &  & &  & & &  & & & & &  &0&0&0&0&0&0&1 \\
\hline
$M $   &1 &2&1 &3&2&1 &4&3&2&2&1 &5&4&3&3&2&2&1 \\
\noalign{\smallskip}\hline\hline
\end{tabular}
\end{table}

\begin{proof}
We start by inserting the convenient Euler's definition of the gamma function \cite{1}
\begin{eqnarray*}
  \Gamma(\alpha-is) &=& \int_0^{+\infty} e^{-t}t^{\alpha-is-1}\,dt,\quad(\alpha>0,\,s\in\mathbb R)\\
  \Gamma(\beta+is)  &=& \int_0^{+\infty} e^{-r}r^{\beta+is-1}\,dr,\quad(\beta>0,\,s\in\mathbb R)
\end{eqnarray*}
into \eqref{1.2} and interchanging between $s$- and ($r,t$)-integrations, we get
\begin{equation}\label{2.6}
\mathcal F^{(\alpha,\beta)}_m(\lambda) = \int_0^{+\infty} e^{-t}t^{\alpha-1}
  \left[\int_0^{+\infty} e^{-r}r^{\beta-1}
  \left(\int_{-\infty}^{+\infty} s^me^{i s\rho(r)}d s\right)dr\right]dt,
\end{equation}
where $\rho(r)=\ln r-\ln t-\lambda$, with $r,t\in\left]0,+\infty\right[$. Using the integral representation of the Dirac delta function, namely
\begin{eqnarray}\label{2.7}
  2\pi\delta[\rho(r)] = \int_{-\infty}^{+\infty}e^{i s\rho(r)}ds,
\end{eqnarray}
for the function $\rho(r)$, and if we likewise take the $m$th derivative of \eqref{2.7} with respect to $\rho(r)$, apparently the quantity in parenthesis in \eqref{2.6} behaves as the $m$th derivative of a delta function, i.e.,
\begin{eqnarray*}
  \int_{-\infty}^{+\infty} s^m e^{i s\rho(r)}ds = 2\pi\,(-i)^m\,\frac{d^m}{d\rho^m}\,\delta[\rho(r)],
\end{eqnarray*}
so, \eqref{2.6} becomes
\begin{equation}\label{2.8}
  \mathcal F^{(\alpha,\beta)}_m(\lambda) = 2\pi(-i)^m\int_0^{+\infty} e^{-t}t^{\alpha-1}
  \left(\int_0^{+\infty} e^{-r}r^{\beta-1}\frac{d^m}{d\rho^m}\,\delta[\rho(r)]dr\right)dt.
\end{equation}
\indent We next evaluate the $r$-integration in \eqref{2.8} using Fa\`a di Bruno's formula \eqref{1.6} (see also, \textbf{0.430} (2) in \cite{4}), taking into account that the Dirac delta function is a composite function with a real function $\rho(r)$, namely
\begin{eqnarray}\label{2.9}
  \frac{d^m}{d\rho^m}\,\delta\circ\phi(\rho) = \sum_{M\in\sigma}\frac{m!}{i_1!\,i_2!\cdots i_m!}\cdot
  \left(\frac{d^M\delta}{dr^M}\circ\phi(r)\right)\cdot\prod_{j=1}^m\left(\frac{r^{(j)}(\rho)}{j!}\right)^{i_j},
\end{eqnarray}
where the summation is taken over all solutions in non negative integers $m$ and $M$ satisfying \eqref{2.4} and \eqref{2.5}. Here $\phi(\rho)$ is the inverse of the function $\rho(r)$; i.e., $r\equiv\phi(\rho)=t\,e^{\lambda+\rho(r)}$.\\
\indent Essentially all we have to do is to take the $m$th derivative of $r=\phi(\rho)$ with respect to $\rho$, i.e.,
\[
\frac{dr}{d\rho}=\frac{d^2r}{d\rho^2}=\frac{d^3r}{d\rho^3}=\cdots=\frac{d^mr}{d\rho^m}=t\,e^{\lambda+\rho(r)}\equiv r(\rho),
\]
so that the Fa\`a di Bruno formula \eqref{2.9} is transformed into
\[
\frac{d^m}{d\rho^m(r)}\,\delta[\rho(r)] = \sum_{M\in\sigma}\frac{m!\,r^M}{\prod_{\nu=1}^m i_\nu!\,(\nu!)^{i_\nu}}
  \frac{d^M}{d r^M}\,\delta[\rho(r)],
\]
and after interchanging the order of the summation and integration, \eqref{2.8} becomes
\begin{eqnarray}\label{2.10}
\mathcal F^{(\alpha,\beta)}_m(\lambda) &=& 2\pi(-i)^m m!\sum_{M\in\sigma}\frac{1}{\prod_{\nu=1}^m i_\nu!\,(\nu!)^{i_\nu}} \nonumber \\
  &&\times\int_0^{+\infty} e^{-t}t^{\alpha-1}
  \left(\int_0^{+\infty} e^{-r}r^{\beta+M-1}\frac{d^M}{d r^M}\,\delta[\rho(r)]\,dr\right)dt.
\end{eqnarray}
\indent We shall first look  at the $r$-integration in \eqref{2.10}. To this end, we need the use of another most elementary property of the delta function involving the decomposition with a real function $\rho(r)$, namely
\[
\delta[\rho(r)]=\sum_k\frac{1}{|\rho'(r_k)|}\,\delta(r-r_k),
\]
where $r_k$ are the simple roots (zeros) of the real function $\rho(r)$; i.e., $\rho(r_k)=0$ and therefore $\rho'(r_k)\neq0$. Solving the equation $\rho(r_k)=0$ and extracting the roots, we find that $r_0=\Lambda t$, with $\Lambda=e^\lambda>0$, is the {\it unique simple root}. Then the delta function can be expressed as
\[
\delta[\rho(r)]=\Lambda t\,\delta(r-\Lambda t),\qquad(\Lambda=e^\lambda).
\]

\indent Substituting the last result into \eqref{2.10}, we find that
\begin{eqnarray}\label{2.11}
\mathcal F^{(\alpha,\beta)}_m(\lambda) &=& 2\pi\Lambda(-i)^m m!\sum_{M\in\sigma}\frac{1}{\prod_{\nu=1}^m i_\nu!\,(\nu!)^{i_\nu}} \nonumber \\
  &&\times\int_0^{+\infty} e^{-t}t^{\alpha}
  \left(\int_0^{+\infty} e^{-r}r^{\beta+M-1}\frac{d^M}{dr^M}\,\delta(r-\Lambda t)\,dr\right)dt.
\end{eqnarray}
\indent At this stage, it is useful to introduce the property
\[
\int F(x)\frac{d^M}{dx^M}\,\delta(x-x_0)\,dx=(-1)^M\frac{d^M}{d x^M}\,F(x)\Bigg|_{x\equiv x_0},
\]
in order to carry out the $r$-integration in \eqref{2.11}, then we obtain
\begin{equation}\label{2.12}
   \int_0^{+\infty} e^{-r}r^{\beta+M-1}\frac{d^M}{d r^M}\,\delta(r-\Lambda t)\,d r
   = (-1)^M\Lambda^{\beta-1}\frac{d^M}{d t^M}\,e^{-\Lambda t}t^{\beta+M-1},
\end{equation}
and \eqref{2.11} is now expressed in terms of $t$-integration as
\begin{equation}\label{2.13}
\begin{split}
\mathcal F^{(\alpha,\beta)}_m(\lambda) = 2\pi\Lambda^\beta(-i)^m m!\sum_{M\in\sigma}\frac{(-1)^M}{\prod_{\nu=1}^m i_\nu!\,(\nu!)^{i_\nu}}  \int_0^{+\infty} e^{-t}t^{\alpha}\frac{d^M}{d t^M}\,e^{-\Lambda t}t^{\beta+M-1}\,d t.
\end{split}
\end{equation}
\indent Expanding $e^{-\Lambda t}$ in its Taylor series and performing the $M$th derivative of the function $e^{-\Lambda t}t^{\beta+M-1}$ with respect to the variable $t$, we get
\begin{eqnarray}\label{2.14}
  \frac{d^M}{dt^M}\,e^{-\Lambda t}t^{\beta+M-1}
    &=& \sum_{n=0}^{+\infty}\frac{(-\Lambda)^n}{n!}M!\,{\beta+n+M-1\choose M}\,t^{\beta+n-1}\nonumber \\
    &=& t^{\beta-1}\sum_{n=0}^{+\infty}\frac{\Gamma(\beta+n+M)}{\Gamma(\beta+n)}\frac{(-\Lambda t)^n}{n!}\nonumber \\
    &\stackrel{\mbox{\footnotesize (i)}}{=}& t^{\beta-1}\frac{\Gamma(\beta+M)}{\Gamma(\beta)}\sum_{n=0}^{+\infty}
                                                  \frac{(\beta+M)_n}{(\beta)_n} \frac{(-\Lambda t)^n}{n!}\nonumber  \\
    &\stackrel{\mbox{\footnotesize (ii)}}{=}& t^{\beta-1}\frac{\Gamma(\beta+M)}{\Gamma(\beta)}\,{_1}F_1(\beta+M;\beta;-\Lambda t),
\end{eqnarray}
where (i), in \eqref{2.14}, comes from the often-used definition of Pochhammer's symbol $(a)_n=\Gamma(a+n)/\Gamma(a)$ and (ii) therein follows from the series representation expansion for the confluent hypergeometric functions \cite{1,4}. Therefore we may now write \eqref{2.13} as
\begin{eqnarray}\label{2.15}
\mathcal F^{(\alpha,\beta)}_m(\lambda) &=& 2\pi(-i)^m m!\,\frac{\Lambda^\beta}{\Gamma(\beta)}\sum_{M\in\sigma}\frac{(-1)^M\,\Gamma(\beta+M)}{\prod_{\nu=1}^m i_\nu!\,(\nu!)^{i_\nu}} \nonumber \\
  &&\times\int_0^{+\infty} e^{-t}t^{\alpha+\beta-1}\,{_1}F_1(\beta+M;\beta;-\Lambda t)\,dt.
\end{eqnarray}
\indent We can find in \cite{4} and \cite{11} many integrals involving confluent hypergeometric functions with weight measure $e^{-t}t^{\alpha+\beta-1}$. We note, more generally, that such an integral can be expressed in terms of Gauss's hypergeometric functions $_2F_1$. For example, we take {\bf 7.621} (4) of \cite{4}
\begin{eqnarray}
\int_0^{+\infty}e^{-ht}t^{b-1}{_1}F_1(a;c;kt)\,dt =
  \left\{
    \begin{array}{ll}
      \frac{\Gamma(b)}{h^{b}}\,{_2}F_1\left(a,b;c;\frac{k}{h}\right), & \hbox{$|h|>|k|$;} \nonumber \\
      \frac{\Gamma(b)}{(h-k)^{b}}\,{_2}F_1\left(c-a,b;c;\frac{k}{k-h}\right), & \hbox{$|h-k|>|k|$.} \nonumber
    \end{array}
  \right.
\end{eqnarray}
and we will discuss both cases by analyzing the behavior at $\lambda\rightarrow\pm\infty$ in order to satisfy the condition (3) of the introduction, i.e.,
\begin{eqnarray}\label{2.16}
\lim_{\lambda\rightarrow\pm\infty}\mathcal F^{(\alpha,\beta)}_m(\lambda)=0.
\end{eqnarray}
\paragraph{Case 1.} Substituting $a,b,c,h$ and $k$ by their appropriate parameters in \eqref{2.15}, (i.e., $a=\beta+M,b=\alpha+\beta,c=\beta,h=1$ and $k=-\Lambda$), we find the following transform
\begin{eqnarray}\label{2.17}
\mathcal F^{(\alpha,\beta)}_m(\lambda) &=& 2\pi(-i)^m m!\,\frac{\Gamma(\alpha+\beta)}{\Gamma(\beta)}\,\Lambda^\beta \nonumber \\
  &&\times\sum_{M\in\sigma}\frac{(-1)^M\,\Gamma(\beta+M)}{\prod_{\nu=1}^m i_\nu!\,(\nu!)^{i_\nu}}\,{_2}F_1(\beta+M,\alpha+\beta;\beta;-\Lambda),
\end{eqnarray}
with the restriction $1>|\Lambda|=e^\lambda>0$, ($\lambda<0$), and the passage to the limit $\lambda\rightarrow\pm\infty$
\[
\lim_{\lambda\rightarrow\pm\infty}\mathcal F^{(\alpha,\beta)}_m(\lambda)\quad\rightsquigarrow\quad
\lim_{\lambda\rightarrow\pm\infty} e^{\lambda\beta}{_2}F_1(\beta+M,\alpha+\beta;\beta;-e^{\lambda})=0
\]
provides justification that, for $\beta>0$, \eqref{2.16} holds only for {\it negative} $\lambda$, (i.e., $\lambda\rightarrow -\infty$).\\
\indent Now using the Pfaff's transformation of ${_2}F_1$ \cite{4} (see for instance {\bf 9.131} (1) therein), namely
\[
{_2}F_1\left(A,B;C;z\right)=(1-z)^{-B}\,{_2}F_1\left(B,C-A;C;\frac{z}{z-1}\right),\qquad(|z|<1)
\]
where here $|z|=e^\lambda<1$ and keeping in mind that the exchange between the first-two parameters keeps the hypergeometric function unchanged, then \eqref{2.17} becomes
\begin{eqnarray}\label{2.18}
\mathcal F^{(\alpha,\beta)}_m(\lambda)
   & =&  2\pi\,(-i)^m\,m!\,\frac{\Gamma(\alpha+\beta)}{\Gamma(\beta)}\frac{\Lambda^\beta}{(1+\Lambda)^{\alpha+\beta}} \nonumber \\
   && \times\sum_{M\in\sigma}\frac{(-1)^M\,\Gamma(\beta+M)}{\prod_{\nu=1}^m i_\nu!\,(\nu!)^{i_\nu}}\,{_2}F_1\left(-M,\alpha+\beta;\beta;\frac{\Lambda}{1+\Lambda}\right),
\end{eqnarray}
\paragraph{Case 2.} In this second case, with the same parameters $a,b,c,h$ and $k$ as above, we have
\begin{eqnarray}\label{2.19}
\mathcal F^{(\alpha,\beta)}_m(\lambda)
   & =&  2\pi\,(-i)^m\,m!\,\frac{\Gamma(\alpha+\beta)}{\Gamma(\beta)}\frac{\Lambda^\beta}{(1+\Lambda)^{\alpha+\beta}} \nonumber \\
   && \times\sum_{M\in\sigma}\frac{(-1)^M\,\Gamma(\beta+M)}{\prod_{\nu=1}^m i_\nu!\,(\nu!)^{i_\nu}}\,{_2}F_1\left(-M,\alpha+\beta;\beta;\frac{\Lambda}{1+\Lambda}\right),
\end{eqnarray}
which is identical to the result found in \eqref{2.18},
with the restriction $|1+\Lambda|>|\Lambda|$, (i.e., $\forall\lambda\in\mathbb R$). Thus from \eqref{2.16}, we obtain
\begin{equation*}
\begin{split}
  \lim_{\lambda\rightarrow\pm\infty}\mathcal F^{(\alpha,\beta)}_m(\lambda)
  \quad&\rightsquigarrow\quad \lim_{\lambda\rightarrow\pm\infty}\frac{e^{\lambda\beta}}{(1+e^{\lambda})^{\alpha+\beta}}\,
  {_2}F_1\left(-M,\alpha+\beta;\beta;\frac{e^{\lambda}}{1+e^{\lambda}}\right) \\
  &=\quad\left\{
    \begin{array}{ll}
      \frac{\Gamma(\beta)\,\Gamma(M-\alpha)}{\Gamma(\beta+M)\,\Gamma(-\alpha)}
            \lim_{\lambda\rightarrow+\infty}e^{-\lambda\alpha}=0,                   &\qquad \hbox{(for $\lambda>0$),} \nonumber \\
      \lim_{\lambda\rightarrow-\infty}\frac{e^{\lambda\beta}}{(1+e^{\lambda})^{\alpha+\beta}}=0,   &\qquad \hbox{(for $\lambda<0$),} \nonumber
    \end{array}
  \right.
 \end{split}
 \end{equation*}
and we conclude that \eqref{2.16} is fulfilled if and only if $\mathbb R\ni(\alpha,\beta)>0$ and $\alpha\neq0,1,2,\cdots$ and holds for both {\it positive and negative} $\lambda$, (i.e., $\lambda\rightarrow\pm\infty$). We note that the ratio $\frac{\Gamma(\beta)\,\Gamma(M-\alpha)}{\Gamma(\beta+M)\,\Gamma(-\alpha)}$ in the expression for the case $\lambda>0$ follows by means of {\bf 9.122} (1) of \cite{4}, while in the case $\lambda<0$ we have used the well-known identity ${_2}F_1\left(-M,\alpha+\beta;\beta;0\right)=1$. In summary, it can be seen that both cases {\it lead to the same result}. The special summation in \eqref{2.19} ($\equiv$ \eqref{2.18}) converges, since the hypergeometric function ${_2}F_1$ terminates and its argument belongs to the unit circle $0<\zeta=\frac{e^{\lambda}}{1+e^{\lambda}}<1$, and for $\lambda\rightarrow+\infty$ the argument $\zeta=\frac{e^{\lambda}}{1+e^{\lambda}}\rightarrow1$ is considered as a branch point at infinity, with $M>\alpha$.\\
\indent Thus, by \eqref{2.19}, we complete the proof of theorem.
\end{proof}

\section{Calculations of some special cases}%

\noindent In this section, We illustrate the applications of theorem to a variety of special cases. Many unknown integrals follow directly from \eqref{2.3} by means of the parameters $\alpha,\beta,m$ and $\lambda$ and can be considered as mostly new. We class the results in two categories: $\alpha=\beta$ and $\alpha\neq\beta$ and they are collected here for use in further applications.

\subsection{$\alpha=\beta\neq0$, $m\neq0$ and $\lambda\neq0$}%

Using both Gauss-Legendre duplication formula and Schwarz's reflection principle for gamma functions \cite{1,4}, we get
\begin{eqnarray}\label{3.1}
\int_{-\infty}^{+\infty}e^{-i\lambda s}s^m|\Gamma(\beta+i s)|^2\,ds
     &=& \sqrt{\pi}\,(-i)^m\, m!\left(\frac{2}{1+\cosh\lambda}\right)^\beta\Gamma\left(\beta+\frac{1}{2}\right) \nonumber \\
     &&\times\sum_{M\in\sigma}\frac{(-1)^M\,\Gamma(\beta+M)}{\prod_{\nu=1}^m i_\nu!\,(\nu!)^{i_\nu}}\,
     {_2}F_1\left(-M,2\beta;\beta;\frac{e^\lambda}{1+e^\lambda}\right),
\end{eqnarray}
where $\beta>0$. Consequently, the following formulas follow
\begin{itemize}
\item $\alpha=\beta\neq0$, $m\neq0$ and $\lambda=0$:
 \begin{eqnarray}\label{3.2}
  \int_{-\infty}^{+\infty}s^m|\Gamma(\beta+i s)|^2\,d s
   &=& \sqrt{\pi}\,(-i)^m\, m!\,\Gamma\left(\beta+\frac{1}{2}\right)\nonumber \\
   &&\times\sum_{M\in\sigma}\frac{(-1)^M\,\Gamma(\beta+M)}{\prod_{\nu=1}^m i_\nu!\,(\nu!)^{i_\nu}}\,{_2}F_1\left(-M,2\beta;\beta;\frac{1}{2}\right).
  \end{eqnarray}
\item $\alpha=\beta\neq0$, $m=0$ and $\lambda\neq0$:
 \begin{eqnarray}\label{3.3}
   \int_{-\infty}^{+\infty}e^{-i\lambda s}|\Gamma(\beta+i s)|^2\,d s = \pi\,2^{1-\beta}(1+\cosh\lambda)^{-\beta}\,\Gamma(2\beta).
 \end{eqnarray}
\item $\alpha=\beta=0$, $m\neq0$ and $\lambda\neq0$: We start first by taking the limit $\alpha\rightarrow0^+$, followed by the limit $\beta\rightarrow0^+$, we find
 \begin{equation}\label{3.4}
\int_{-\infty}^{+\infty}e^{-i\lambda s}s^m|\Gamma(i s)|^2\,d s
   = 2\pi\,(-i)^m\,m!\sum_{M\in\sigma}\frac{(-1)^M\,\Gamma(M)}{\prod_{\nu=1}^m i_\nu!\,(\nu!)^{i_\nu}}\,(1+e^\lambda)^{-M}.
 \end{equation}
   \item $\alpha=\beta\neq0$ and $m=\lambda=0$: Using \eqref{3.2} or \eqref{3.3}, we get
 \begin{eqnarray}\label{3.5}
   \int_{-\infty}^{+\infty}|\Gamma(\beta+i s)|^2\,d s = \pi\,2^{1-2\beta}\,\Gamma(2\beta),
 \end{eqnarray}
   which is identical to the result found in Titchmarsh's book using Mellin's transform \cite{8} (see for example, formula 7.8.2, p.~193 therein).
   \item $\alpha=\beta=\lambda=0$ and $m\neq0$: From \eqref{3.4}
 \begin{eqnarray}\label{3.6}
   \int_{-\infty}^{+\infty}s^m|\Gamma(i s)|^2\,d s
       = 2\pi(-i)^m m!\sum_{M\in\sigma}\frac{(-\frac{1}{2})^M\,\Gamma(M)}{\prod_{\nu=1}^m i_\nu!\,(\nu!)^{i_\nu}}.
 \end{eqnarray}
\end{itemize}

\subsection{$\alpha\neq\beta$ and $\alpha\neq0$, $\beta\neq0$}%

In this case, the following possibilities are deduced from \eqref{2.3}:
\begin{itemize}
\item $m\neq0$ and $\lambda=0$:
 \begin{equation}\label{3.7}
 \begin{split}
  \int_{-\infty}^{+\infty}s^m\Gamma(\alpha-i s)\Gamma(\beta+i s)&\,ds
    = \pi(-i)^m m!\,2^{1-\alpha-\beta}\,\frac{\Gamma(\alpha+\beta)}{\Gamma(\beta)} \\
   &\times\sum_{M\in\sigma}\frac{(-1)^M\,\Gamma(\beta+M)}{\prod_{\nu=1}^m i_\nu!\,(\nu!)^{i_\nu}}\,{_2}F_1\left(-M,\alpha+\beta;\beta;\frac{1}{2}\right),
  \end{split}
  \end{equation}
 which is the desired result that we are looking for in \eqref{1.1}. Further physical applications of \eqref{3.7} can be seen in section 5.
\item $m=0$ and $\lambda\neq0$:
 \begin{eqnarray}\label{3.8}
  \int_{-\infty}^{+\infty}e^{-i\lambda s}\Gamma(\alpha-i s)\Gamma(\beta+i s)\,d s = 2\pi\,\frac{e^{\lambda\beta}}{(1+e^\lambda)^{\alpha+\beta}}\,\Gamma(\alpha+\beta).
 \end{eqnarray}
\item $m=\lambda=0$: Using \eqref{3.7} or \eqref{3.8}, we get
 \begin{eqnarray}\label{3.9}
  \int_{-\infty}^{+\infty}\Gamma(\alpha-i s)\Gamma(\beta+i s)\,d s = \pi\,2^{1-\alpha-\beta}\,\Gamma(\alpha+\beta),
 \end{eqnarray}
 which can be regarded as a {\it complex companion integral} to {\bf 6.411} of \cite{4}. Setting $\alpha=0$, there are other relations and here below we mention three more:
\item $\alpha=\lambda=0$ and $m\neq0$: Using \eqref{3.7}, we obtain
 \begin{equation}\label{3.10}
\int_{-\infty}^{+\infty}s^m\Gamma(-i s)\Gamma(\beta+i s)\,ds
   = \pi(-i)^m m!\,2^{1-\beta}\sum_{M\in\sigma}\frac{(-\frac{1}{2})^M\,\Gamma(\beta+M)}{\prod_{\nu=1}^m i_\nu!\,(\nu!)^{i_\nu}},
 \end{equation}
 where for $\beta=0$, we obtain once again \eqref{3.6}.
\item $\alpha=m=0$ and $\lambda\neq0$: From \eqref{3.8}, we have
 \begin{equation}\label{3.11}
  \int_{-\infty}^{+\infty}e^{-i\lambda s}\Gamma(-i s)\Gamma(\beta+i s)\,d s = 2\pi\left(\frac{e^\lambda}{1+e^\lambda}\right)^\beta\Gamma(\beta).
 \end{equation}
\item $\alpha=m=\lambda=0$: Using \eqref{3.9} or \eqref{3.11}, we get
 \begin{equation}\label{3.12}
  \int_{-\infty}^{+\infty}\Gamma(-is)\Gamma(\beta+is)\,ds = \pi 2^{1-\beta}\Gamma(\beta).
 \end{equation}
\end{itemize}

\indent The identification \eqref{3.12}, for example, provides a very useful {\it integral representation} of the gamma function in terms of two other complex gamma functions. For example, a direct application of \eqref{3.12} for $\beta=1$ gives $\pi$, while the software system {\sc Mathematica} gives an integral of ${\rm csch}(\pi s)$ which does not converge on $\left]-\infty,+\infty\right[$.

\section{Mathematical applications of the theorem: Analysis and number theory}%

The main aim of this section is to show the efficiency of our formula \eqref{2.3} in establishing some well known, as well as unknown, results. It should be noted that applications of our formula are not limited to only the presented results but further identities can be obtained. It is obvious that we need the use of Table~1 to evaluate the different integrals.
\subsection{Test 1: some known integrals}%
For the first test, we try to recover some well known results using the relations deduced in section 3.
\subsubsection{Example 1: $\alpha=\beta=\lambda=0$ and $m=2$.}%
We use \eqref{3.6} and we get
\begin{eqnarray}\label{4.1}
  \mathcal F_2^{(0,0)}(0) &=& \int_{-\infty}^{+\infty}s^2\,|\Gamma(i s)|^2\,d s\nonumber \\
   &=& 2\pi\,(-i)^2\,2!\left(\frac{\left(-\frac{1}{2}\right)^2}{2!\,0!}\frac{\Gamma(2)}{(1!)^2(2!)^0}+
                                \frac{\left(-\frac{1}{2}\right)^1}{0!\,1!}\frac{\Gamma(1)}{(1!)^0(2!)^1}\right)\nonumber \\
   &=& \frac{\pi}{2}.
\end{eqnarray}
\indent On the other hand combining the integral $\int_0^{+\infty}\frac{s\,d s}{\sinh\pi s}=\frac{1}{4}$ in {\bf 3.521} (1) with the identity $|\Gamma(i s)|^2=\pi/(s\,\sinh\pi s)$ in {\bf 8.332} (1) of \cite{4}, taking into account that the integrand is an even function, namely
\[
   \int_0^{+\infty}\frac{s}{\sinh\pi s}\,d s=\frac{1}{\pi}\int_0^{+\infty} s^2\frac{\pi}{s\,\sinh\pi s}\,d s
    =\frac{1}{2\pi}\int_{-\infty}^{+\infty} s^2\,|\Gamma(i s)|^2\,d s=\frac{1}{4},
\]
we recover the result obtained in \eqref{4.1}.
\subsubsection{Example 2: $\alpha=\beta=\frac{1}{2}$, $m=4$ and $\lambda=0.$}%
Using Table~1 for $m=4$ with the help of \eqref{3.2}, we obtain
\begin{eqnarray}\label{4.2}
\mathcal F_4^{(1/2,1/2)}(0) &=& \int_{-\infty}^{+\infty}s^4\,|\Gamma\left(1/2+i s\right)|^2\,ds \nonumber \\
    &=& \sqrt\pi\, 4!\,\Bigg(\frac{(-1)^4}{4!}\frac{\Gamma(\frac{9}{2})}{1!}\,_2F_1(-4,1;1/2;1/2)
    +\frac{(-1)^3}{2!}\frac{\Gamma(\frac{7}{2})}{2!}\,_2F_1(-3,1;1/2;1/2) \nonumber \\
    && +\frac{(-1)^2}{1!}\frac{\Gamma(\frac{5}{2})}{3!}\,_2F_1(-2,1;1/2;1/2)
    +\frac{(-1)^2}{2!}\frac{\Gamma(\frac{5}{2})}{(2!)^2}\,_2F_1(-2,1;1/2;1/2) \nonumber \\
    && +\frac{(-1)^1}{1!}\frac{\Gamma(\frac{3}{2})}{4!}\,_2F_1(-1,1;1/2;1/2)\Bigg) \nonumber \\
    &=& \frac{5\pi}{16}.
\end{eqnarray}
\indent As in the example 1, we combine the integral $\int_0^{+\infty}\frac{s^4\,d s}{\cosh\pi s}=\frac{5}{32}$ given by {\bf 3.523} (4) with the identity $|\Gamma(1/2+i s)|^2=\pi/\cosh\pi s$ in {\bf 8.332} (2) of \cite{4}, namely
\begin{equation*}
\int_0^{+\infty}\frac{s^4}{\cosh\pi s}\,d s=\frac{1}{2\pi}\int_{-\infty}^{+\infty} s^4\frac{\pi}{\cosh\pi s}
\,d s=\frac{1}{2\pi}\int_{-\infty}^{+\infty} s^4\,|\Gamma\left(1/2+i s\right)|^2\,d s=\frac{5}{32},
\end{equation*}
so, we obtain the result \eqref{4.2}.
\subsubsection{Example 3: $\alpha=\beta=1$ and $m=\lambda=0.$}%
In this case, we can use \eqref{3.5} and obtain
\begin{eqnarray}\label{4.3}
  \mathcal F_0^{(1,1)}(0)&=&\int_{-\infty}^{+\infty}|\Gamma(1+i s)|^2\,d s = \pi\,2^{-1}\,\Gamma(2) = \frac{\pi}{2}.
\end{eqnarray}
\indent Once again, if we combine the integral $\int_0^{+\infty}\frac{s\,d s}{\sinh\pi s}=\frac{1}{4}$ of the example 1 with the identity $|\Gamma(1+i s)|^2=\pi s/\sinh\pi s$ given by {\bf 8.332} (3) in \cite{4}, we get
\[
\int_0^{+\infty}\frac{s}{\sinh\pi s}\,d s=\frac{1}{\pi}\int_0^{+\infty}\frac{\pi s}{\sinh\pi s}\,d s
=\frac{1}{2\pi}\int_{-\infty}^{+\infty}|\Gamma(1+i s)|^2\,d s=\frac{1}{4},
\]
which leads to the same result as in \eqref{4.3} and \eqref{4.1}.
\subsection{Test 2: Euler polynomials and some related identities}%
Euler's polynomials, $E_n(x)$, have been defined in a number of different ways (\cite{4}, section {\bf 9.6}) and occur frequently in analysis and number theory. This second test aims to generate another and new representation of these polynomials, different from those known in the literature. As a consequence, some unknown results can be deduced by choosing suitably the parameters. Let us start by given the following lemma:
\newtheorem{Lemma}{Lemma}[section]%
\begin{Lemma} For $0<\beta<1$ and $m\geq0$, the Euler polynomials are defined otherwise by
\begin{eqnarray}\label{4.4}
   E_m(\beta) &=& \frac{(-1)^m\,m!}{\Gamma(\beta)}\,\sum_{M\in\sigma}\frac{(-1)^M\,\Gamma(\beta+M)}{\prod_{\nu=1}^m i_\nu!\,(\nu!)^{i_\nu}}\,{_2}F_1\left(-M,1;\beta;\frac{1}{2}\right).
\end{eqnarray}
\end{Lemma}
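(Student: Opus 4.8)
The plan is to specialise the main theorem to the single value $\alpha=1-\beta$. This choice is designed to make everything in \eqref{2.3} collapse: one gets $\alpha+\beta=1$, so $\Gamma(\alpha+\beta)=\Gamma(1)=1$, $2^{1-\alpha-\beta}=1$, and the first Gauss parameter becomes $\alpha+\beta=1$, turning ${}_2F_1(-M,\alpha+\beta;\beta;\tfrac12)$ into ${}_2F_1(-M,1;\beta;\tfrac12)$. I would then evaluate $\mathcal F^{(1-\beta,\beta)}_m(0)$ in two ways — once from the $\lambda=0$ corollary \eqref{3.7}, and once from the $m=0$ closed form \eqref{3.8} combined with the generating function of the Euler polynomials — and compare.

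First I would set $\alpha=1-\beta$ in \eqref{3.7}. Since $0<\beta<1$ we have $0<\alpha<1$, so $\alpha>0$ and $\alpha\notin\{0,1,2,\dots\}$, hence Theorem~1 and \eqref{3.7} apply, and with the simplifications above it reduces to
\[
\mathcal F^{(1-\beta,\beta)}_m(0)=\pi\,(-i)^m\,m!\,\frac{1}{\Gamma(\beta)}\sum_{M\in\sigma}\frac{(-1)^M\,\Gamma(\beta+M)}{\prod_{\nu=1}^m i_\nu!\,(\nu!)^{i_\nu}}\,{}_2F_1\!\left(-M,1;\beta;\frac{1}{2}\right).
\]
The double-sum on the right is exactly $(-1)^m$ times what appears on the right-hand side of \eqref{4.4} (after restoring the factor $(-1)^m m!/\Gamma(\beta)$), so it remains only to identify $\mathcal F^{(1-\beta,\beta)}_m(0)$ with $\pi\,i^m E_m(\beta)$.

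For this I would set $\alpha=1-\beta$ in \eqref{3.8}, giving $\mathcal F^{(1-\beta,\beta)}_0(\lambda)=2\pi\,e^{\lambda\beta}/(1+e^{\lambda})$, which by the standard generating function $2e^{xt}/(e^t+1)=\sum_{n\ge0}E_n(x)t^n/n!$ (\cite{4}, \textbf{9.6}) equals $\pi\sum_{n\ge0}E_n(\beta)\lambda^n/n!$. On the other hand, since $\mathcal F^{(1-\beta,\beta)}_0(\lambda)$ is analytic at $\lambda=0$, its Taylor coefficients are computed by \eqref{1.4}: $\frac{d^m}{d\lambda^m}\mathcal F^{(1-\beta,\beta)}_0(\lambda)\big|_{\lambda=0}=(-i)^m\mathcal F^{(1-\beta,\beta)}_m(0)$, so $\mathcal F^{(1-\beta,\beta)}_0(\lambda)=\sum_{m\ge0}\frac{(-i)^m\lambda^m}{m!}\,\mathcal F^{(1-\beta,\beta)}_m(0)$. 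Matching the coefficient of $\lambda^m$ in the two expansions and using $1/(-i)=i$ yields $\mathcal F^{(1-\beta,\beta)}_m(0)=\pi\,i^m E_m(\beta)$. Equating this with the displayed formula above and cancelling $\pi$ gives $i^m E_m(\beta)=(-i)^m m!\,\Gamma(\beta)^{-1}(\text{sum})$; since $(-i)^m/i^m=(-1)^m$, this is precisely \eqref{4.4}.

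There is no serious analytic obstacle; the whole content is the bookkeeping choice $\alpha=1-\beta$, which simultaneously trivialises the prefactors and converts the $m=0$ transform into the Euler generating function. The one point that must be watched is that the hypothesis $0<\beta<1$ is used essentially — it is exactly what guarantees $\alpha=1-\beta>0$ (and $\notin\mathbb N$), so that Theorem~1 and its corollaries \eqref{3.7}, \eqref{3.8} are applicable. (An alternative ending dispenses with \eqref{3.8}: the reflection formula $\Gamma(\beta+is)\Gamma(1-\beta-is)=\pi/\sin\pi(\beta+is)$ rewrites $\mathcal F^{(1-\beta,\beta)}_m(0)$ as $\pi\int_{-\infty}^{+\infty}s^m/\sin\pi(\beta+is)\,ds$, which one expands geometrically on $(0,\infty)$ and $(-\infty,0)$ and recognises as the Fourier series of $E_m(\beta)$ on $0<\beta<1$; but the generating-function route above is shorter and entirely self-contained within the paper.)
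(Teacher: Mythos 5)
Your proposal is correct and takes essentially the same route as the paper: both specialize to $\alpha+\beta=1$, combine the closed form \eqref{3.8} with the Euler generating function $2e^{\lambda\beta}/(1+e^{\lambda})=\sum_n E_n(\beta)\lambda^n/n!$, and use the fact that $\lambda$-derivatives of $\mathcal F^{(\alpha,\beta)}_0$ reproduce the $s^m$-weighted transform (your use of \eqref{1.4} and Taylor coefficients at $\lambda=0$ is just a repackaging of the paper's intermediate identity \eqref{4.5} followed by the limit $\lambda\to0$).
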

Before indicating a proof of Lemma 4.1, it is important to mention the relationship between $E_m(\beta)$'s and a discrete summation on the product $(\beta)_M\times{_2}F_1\left(-M,1;\beta;\frac{1}{2}\right)$, where $(\beta)_M=\Gamma(\beta+M)/\Gamma(\beta)$ is the Pochhammer's symbol.
\begin{proof}
From the identity \eqref{3.8}, we take the $m$th derivative with respect to $\lambda$ and we obtain
\[
\int_{-\infty}^{+\infty}e^{-i\lambda s} s^m\Gamma(\alpha-i s)\Gamma(\beta+i s)\,d s=2\pi\,i^m\,\Gamma(\alpha+\beta)\,
\frac{d^m}{d\lambda^m}\left[\frac{e^{\lambda\beta}}{(1+e^\lambda)^{\alpha+\beta}}\right],
\]
which is exactly our previously formula \eqref{2.4} arising from the theorem, i.e.,
\begin{eqnarray}\label{4.5}
  \frac{d^m}{d\lambda^m}\left[\frac{e^{\lambda\beta}}{(1+e^\lambda)^{\alpha+\beta}}\right] &=& \frac{(-1)^m\,m!}{\Gamma(\beta)}\,\frac{e^{\lambda\beta}}{(1+e^\lambda)^{\alpha+\beta}} \nonumber \\
  &&\times\sum_{M\in\sigma}\frac{(-1)^M\,\Gamma(\beta+M)}{\prod_{\nu=1}^m i_\nu!\,(\nu!)^{i_\nu}}\,{_2}F_1\left(-M,\alpha+\beta;\beta;\frac{e^\lambda}{1+e^\lambda}\right).
\end{eqnarray}
\indent Now setting $\alpha+\beta=1$ in \eqref{4.5}, (i.e., for $\alpha=1-\beta>0$ we get $0<\beta<1$), we can observe the similarity between \eqref{4.5} and the generating-function for the Euler polynomials \cite{4},
\[
\frac{2\,e^{\lambda\beta}}{1+e^\lambda}=\sum_{n=0}^{+\infty}E_n(\beta)\,\frac{\lambda^n}{n!}, \qquad(|\lambda|<\pi).
\]

\indent Then, taking the $m$th derivative of the last identity
\[
\frac{d^m}{d\lambda^m}\left(\frac{2\,e^{\lambda\beta}}{1+e^\lambda}\right)
=\sum_{n=m}^{+\infty}E_n(\beta)\,\frac{\lambda^{n-m}}{(n-m)!},\qquad(n\geq m),
\]
we may identify $E_n(\beta)$'s as successive derivatives, for $n=m$, of the generating-function by setting the limit $\lambda\rightarrow0$. Then from \eqref{4.5} we have
\begin{equation*}
\begin{split}
E_m(\beta) &= \frac{d^m}{d\lambda^m}\left(\frac{2\,e^{\lambda\beta}}{1+e^\lambda}\right)\Bigg|_{\lambda=0}\\
           &= \frac{(-1)^m\,m!}{\Gamma(\beta)}\,\sum_{M\in\sigma}\frac{(-1)^M\,\Gamma(\beta+M)}{\prod_{\nu=1}^m
                 i_\nu!\,(\nu!)^{i_\nu}}\,{_2}F_1\left(-M,1;\beta;\frac{1}{2}\right).
\end{split}
\end{equation*}
\indent This proves \eqref{4.4} in the Lemma 4.1.
\end{proof}

\indent We can also apply Lemma 4.1 to obtain some new {\it unknown} results concerning, in particular, the calculations of: (i) Euler's $E_m$ and (ii) Bernoulli's $B_m$ numbers, (iii) monomial expansions $\beta^m$ as generating functions, (iv) residues of gamma function, and (v) the associated Laguerre's polynomials $L_m^{(-m)}(\beta)$ in terms of a discrete summation.\\
\indent To this end, we have the following lemma:
\begin{Lemma} We have
\begin{enumerate}
  \item Euler's numbers:
  \begin{equation}\label{4.6}
      E_m = \frac{(-2)^m\,m!}{\sqrt\pi}\,\sum_{M\in\sigma}\frac{(-1)^M\,\Gamma\left(M+\frac{1}{2}\right)}{\prod_{\nu=1}^m
                 i_\nu!\,(\nu!)^{i_\nu}}\,{_2}F_1\left(-M,1;\frac{1}{2};\frac{1}{2}\right),\,\,\quad(m\geq0).
  \end{equation}
  \item Bernoulli's numbers:
  \begin{eqnarray}
      B_{m+1} &=& \frac{(-1)^{m+1}\,(m+1)\,m!}{2^{m+1}-1}\,\sum_{M\in\sigma}\frac{(-\frac{1}{2})^{M+1}\,\Gamma(M+1)}{\prod_{\nu=1}^m
             i_\nu!\,(\nu!)^{i_\nu}},\,\,\qquad(m\geq1),\label{4.7} \\
      B_{m}   &=&  \frac{(-1)^{m+1}\,m\,m!}{2^{m}-1}\,\sum_{M\in\sigma}\frac{(-\frac{1}{2})^{M}\,\Gamma(M)}{\prod_{\nu=1}^m
             i_\nu!\,(\nu!)^{i_\nu}},\qquad\qquad\qquad\quad(m\geq1).\label{4.8}
  \end{eqnarray}
  \item Monomial expansions (generating functions):
  \begin{equation}\label{4.9}
  \beta^m = \frac{(-1)^{m}\,m!}{\Gamma(\beta)}\,\sum_{M\in\sigma}\frac{(-1)^M\,\Gamma\left(\beta+M\right)}{\prod_{\nu=1}^m
                 i_\nu!\,(\nu!)^{i_\nu}},\,\,\,\,\,\qquad\quad(\mathbb R\ni\beta\neq-m,\, m\geq0).
  \end{equation}
  \item Residues of gamma function at $M=-m$:
  \begin{equation}\label{4.10}
  {\rm Res}\{-m,\Gamma\}\stackrel{\mbox{\tiny{\rm (def)}}}{\equiv}\frac{(-1)^m}{m!} = \sum_{M\in\sigma}\frac{(-1)^M\,\Gamma\left(M+1\right)}{\prod_{\nu=1}^m
                 i_\nu!\,(\nu!)^{i_\nu}},\qquad\qquad\,\,\,\quad(m\geq0).
  \end{equation}
  \item Associated Laguerre polynomials
  \begin{equation}\label{4.11}
  L_m^{(-m)}(\beta) \stackrel{\mbox{\tiny{\rm (def)}}}{\equiv} \frac{(-1)^m}{m!}\,\beta^m = \frac{1}{\Gamma(\beta)}\,\sum_{M\in\sigma}\frac{(-1)^M\,\Gamma\left(\beta+M\right)}{\prod_{\nu=1}^m
                 i_\nu!\,(\nu!)^{i_\nu}},\,\,\,\qquad(m\geq0).
  \end{equation}
\end{enumerate}
\end{Lemma}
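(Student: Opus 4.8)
The plan is to read off the five identities from Lemma~4.1 (equation \eqref{4.4}) and from the auxiliary identity \eqref{4.5} established in its proof, each time combined with one classical fact that I would quote without proof: the relation $E_m=2^mE_m(\tfrac12)$ between the Euler numbers and polynomials; the link $E_m(0)=\tfrac{2(1-2^{m+1})}{m+1}\,B_{m+1}$ between the Euler polynomials at the origin and the Bernoulli numbers; the residue $\mathrm{Res}_{z=-m}\Gamma(z)=(-1)^m/m!$; and the special value $L_m^{(-m)}(x)=(-x)^m/m!$ of the associated Laguerre polynomials. Apart from one genuine limit, the proof is then a sequence of substitutions.

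I would make \eqref{4.9} the hub. Starting from \eqref{4.5} I let $\lambda\to-\infty$. On the left, Leibniz's rule gives $e^{-\lambda\beta}\frac{d^m}{d\lambda^m}\bigl[e^{\lambda\beta}(1+e^\lambda)^{-(\alpha+\beta)}\bigr]=\sum_{k=0}^m{m\choose k}\beta^k\,\frac{d^{m-k}}{d\lambda^{m-k}}(1+e^\lambda)^{-(\alpha+\beta)}$, and since every positive-order $\lambda$-derivative of $(1+e^\lambda)^{-(\alpha+\beta)}$ is $O(e^\lambda)$ while the function itself tends to $1$, the left side tends to $\beta^m$. On the right, $\tfrac{e^\lambda}{1+e^\lambda}\to0$, so the terminating ${}_2F_1\bigl(-M,\alpha+\beta;\beta;\tfrac{e^\lambda}{1+e^\lambda}\bigr)\to1$ and the prefactor $e^{-\lambda\beta}\tfrac{e^{\lambda\beta}}{(1+e^\lambda)^{\alpha+\beta}}\to1$; equating limits gives \eqref{4.9} (the excluded $\beta=-m$ being where the written form $\Gamma(\beta)^{-1}\Gamma(\beta+m)$ degenerates to $0\cdot\infty$, although after cancelling the $\Gamma$'s one gets a polynomial identity valid for all $\beta$). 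Then \eqref{4.10} is \eqref{4.9} at $\beta=1$, once its left side is identified with $\mathrm{Res}_{z=-m}\Gamma(z)$; and \eqref{4.11} is \eqref{4.9} rewritten as $\tfrac{(-1)^m}{m!}\beta^m=\Gamma(\beta)^{-1}\sum_{M\in\sigma}\tfrac{(-1)^M\Gamma(\beta+M)}{\prod_{\nu=1}^m i_\nu!\,(\nu!)^{i_\nu}}$ and identified with $L_m^{(-m)}(\beta)$ through the special value above.

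For \eqref{4.6} I would just set $\beta=\tfrac12$ in \eqref{4.4}: then $\Gamma(\beta)=\sqrt\pi$, $\Gamma(\beta+M)=\Gamma(M+\tfrac12)$, every ${}_2F_1$ acquires argument $\tfrac12$, and multiplying through by $2^m$ via $E_m=2^mE_m(\tfrac12)$ turns $(-1)^m m!$ into $(-2)^m m!$, which is exactly \eqref{4.6}. For \eqref{4.7} I would let $\beta\to0^+$ in \eqref{4.4}: since for $m\ge1$ every partition has $M\ge1$, each summand is the finite quantity $\Gamma(\beta)^{-1}\Gamma(\beta+M)\,{}_2F_1(-M,1;\beta;\tfrac12)=(\beta)_M\,{}_2F_1(-M,1;\beta;\tfrac12)$; using $(1)_k=k!$ and $(\beta)_M=(\beta)_k(\beta+k)_{M-k}$ it equals $\sum_{k=0}^M(-M)_k(\beta+k)_{M-k}2^{-k}$, and as $\beta\to0$ the $k=0$ term drops out while the rest converges to $\sum_{k=1}^M(-M)_k\,\tfrac{(M-1)!}{(k-1)!}\,2^{-k}$, which collapses, after the shift $j=k-1$ and one use of the binomial theorem, to $-M!/2^M$. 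Hence $E_m(0)=-(-1)^m m!\sum_{M\in\sigma}\tfrac{(-1/2)^M\,M!}{\prod_{\nu=1}^m i_\nu!\,(\nu!)^{i_\nu}}$, and inserting this into $B_{m+1}=-\tfrac{m+1}{2(2^{m+1}-1)}\,E_m(0)$ gives \eqref{4.7}. Finally \eqref{4.8} follows either by recognising---through Fa\`a di Bruno---the partition sum $\sum_{M\in\sigma}\tfrac{(-1/2)^M\Gamma(M)}{\prod_{\nu=1}^m i_\nu!\,(\nu!)^{i_\nu}}$ as $(m!)^{-1}$ times the $m$-th $\lambda$-derivative at $0$ of $-\log\bigl(\tfrac12(1+e^\lambda)\bigr)$, hence expressible via $\tfrac{2}{1+e^\lambda}=\sum_n E_n(0)\lambda^n/n!$ and so through $E_{m-1}(0)$ and then $B_m$; or, more directly, by equating \eqref{3.6} with the classical integral $\int_0^\infty s^{2n-1}/\sinh(\pi s)\,ds=(2^{2n}-1)|B_{2n}|/(2n)$ for even $m=2n$, the odd cases $m\ge3$ reading $0=B_m$ on both sides and $m=1$ returning $B_1=-\tfrac12$.

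The one real obstacle is the $\beta\to0^+$ limit feeding \eqref{4.7}: one must handle the term-by-term $0\cdot\infty$ in $\Gamma(\beta)^{-1}\Gamma(\beta+M)\,{}_2F_1(-M,1;\beta;\tfrac12)$ and then recognise $\sum_{k=1}^M(-M)_k\,\tfrac{(M-1)!}{(k-1)!}\,2^{-k}=-M!/2^M$ as a single instance of the binomial theorem; the rest is substitution and light bookkeeping, the only care being the even/odd split and the slightly anomalous $m=1$ in \eqref{4.8}.
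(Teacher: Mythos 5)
Your proposal is correct, but parts of it take a genuinely different route from the paper's proof, so a comparison is worth recording. Items \eqref{4.6}, \eqref{4.10} and \eqref{4.11} you handle exactly as the paper does (set $\beta=\frac{1}{2}$ in \eqref{4.4} and use $E_m=2^mE_m(\frac{1}{2})$; set $\beta=1$ in \eqref{4.9}; invoke the special value of $L_m^{(-m)}$). For \eqref{4.9} the paper instead specializes \eqref{4.5} to $\alpha+\beta=0$, where ${}_2F_1(-M,0;\beta;z)\equiv1$ and the left-hand side is simply $\beta^m e^{\lambda\beta}$, then lets $\lambda\to0$; your $\lambda\to-\infty$ limit with Leibniz is equally valid and has the small advantage of staying inside the range $\alpha>0$ in which \eqref{4.5} was derived (the paper's choice $\alpha=-\beta<0$ is harmless only because \eqref{4.5} is a pure differentiation identity), with the extension to general $\beta$ by polynomial identity noted correctly. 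For \eqref{4.7} the paper takes the opposite boundary value: it sets $\beta=1$ in \eqref{4.4}, where ${}_2F_1(-M,1;1;\frac{1}{2})=2^{-M}$ collapses each term at once, and quotes $E_m(1)=2\,\frac{2^{m+1}-1}{m+1}\,B_{m+1}$ from \cite{4}; your $\beta\to0^+$ route through $E_m(0)$ reaches the same formula but at the cost of the $0\cdot\infty$ analysis you describe---that computation is correct (the collapse of $\sum_{k=1}^M(-M)_k\frac{(M-1)!}{(k-1)!}2^{-k}$ to $-M!/2^M$ checks out), just avoidable. Finally, for \eqref{4.8} you supply more than the paper does: the paper merely cites the same Euler--Bernoulli relation again and verifies $m=1$, without explaining how the $\Gamma(M)$-weighted partition sum arises, whereas either of your arguments (Fa\`a di Bruno applied to $-\log\frac{1+e^\lambda}{2}$ combined with the generating function of $E_n(0)$, or equating \eqref{3.6} with the classical value of $\int_0^{\infty}s^{2n-1}/\sinh\pi s\,ds$) is an actual derivation, provided the even/odd split and the convention $B_1=-\frac{1}{2}$ at $m=1$ are handled as you indicate.
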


\begin{proof}
\begin{enumerate}
  \item We can prove \eqref{4.6} by setting the definition of the Euler numbers, i.e., $E_m=2^m E_m\left(\frac{1}{2}\right),$ given by {\bf 9.655} (3) in \cite{4}, for the value $\beta=\frac{1}{2}$.
  \item The identities \eqref{4.7} and \eqref{4.8} can be calculated using {\bf 9.655} (2) of \cite{4}, namely
        \[ E_m(1)=2\frac{2^{m+1}-1}{m+1}\,B_{m+1}, \]
        by setting $\beta=1$ in \eqref{4.4}. It should be noted that an alternative definition in \eqref{4.8} for the Bernoulli numbers must be taken into account since \eqref{4.7} starts by computing the value of $B_2=\frac{1}{6}$, while \eqref{4.8} gives in first the value of $B_1=-\frac{1}{2}$. Indeed a simple application of \eqref{4.7} leads for $m=M=1$ to: $B_2=\frac{(-1)^{1+1}(1+1)1!}{2^{1+1}-1}\frac{(-1/2)^{1+1}\Gamma(1+1)}{1!(1!)^1}\equiv\frac{1}{6}$, while for the same value of $m$, \eqref{4.8} gives: $B_1=\frac{(-1)^{1+1}1\cdot1!}{2^1-1}\frac{(-1/2)^{1}\Gamma(1)}{1!(1!)^1}\equiv-\frac{1}{2}$, as expected.
  \item The monomial expansion \eqref{4.9} can be obtained by setting $\alpha+\beta=0$ in \eqref{4.5} (with $\beta>0$), followed by evaluating the $m$-derivative of $e^{\lambda\beta}$ and taking the limit $\lambda\rightarrow0$ to both the sides of equality. We can see \eqref{4.9} as the generating function of the monomial $\beta^m$ like that involves the Stirling number of the second kind $\mathfrak S_m^{(M)}$ (see, for example {\bf 9.745} (1) in \cite{4}) which allows us to count the number of ways of partitioning a set of $m$ elements into $M$ subsets.
  \item From \eqref{4.9} the reader can easily obtain \eqref{4.10}, by setting $\beta=1$.
  \item Finally, following the definition of the associated Laguerre polynomials \cite{4} (see, {\bf 8.973} (4)), the expression in \eqref{4.11} can be obtained from \eqref{4.9}.
\end{enumerate}
\end{proof}

\indent As an application, we verified the correctness of all expressions of \eqref{4.4} and \eqref{4.6}-\eqref{4.11}, for even and odd $m$ up to 8, by recovering the standard results (i.e., all polynomials, monomials and numbers). The reader can easily check that for the case $m=6$ corresponds 11 different combinations of $i_k$, the case $m=7$ has 15 possibilities and the case $m=8$ accepts 22 possibilities, therefore we leave it as an exercise to the interested reader.

\section{Physical applications of the theorem: Quantum mechanics in phase-space}%

In the last decade, there has been great interest to study quantum systems endowed with position-dependent effective mass (PDEM) for Schr\"odinger equation (see, \cite{12,13,14} and references therein). In our paper \cite{3} we succeeded to construct, analytically and numerically, the Wigner's distribution functions (WDF) for the generalized Laguerre polynomials using to that end an exponentially decaying mass function.\\
\indent In this context, we observed that an important application of our theorem, \eqref{2.3}, arises when we deal with computations of expectation values for position and momentum operators using the Weyl's transforms (WT).\\
\indent Briefly, given the eigenfunctions $\psi_n(x)$ corresponding to the energy eigenvalues $E_n$, WDF in the phase space is defined by (see, e.g., the report papers \cite{15,16})
\begin{eqnarray}\label{5.1}
  \mathcal W(\psi_n|x,p)&=&\frac{1}{2\pi}\int_{-\infty}^{+\infty}e^{-i py}\,\psi^\ast_n\left(x-\frac{y}{2}\right)\psi_n\left(x+\frac{y}{2}\right)\,d y,
\end{eqnarray}
with $\hbar=1$. Using \eqref{5.1} the expectation value of an operator $\hat{\mathcal O}$ is given through
\begin{eqnarray}\label{5.2}
  \langle{\mathcal O}\rangle&=&\int_{-\infty}^{+\infty}dx\int_{-\infty}^{+\infty}dp\,\,\mathcal W(\psi_n|x,p)\,\mathfrak{W}[\hat{\mathcal O}],
\end{eqnarray}
where $\mathfrak{W}[\hat{\mathcal O}]$ is the central object in the phase space (or deformed-) quantization, called the Weyl transformation of the operator $\hat{\mathcal O}$ and defined by
\begin{eqnarray}\label{5.3}
  \mathfrak{W}[\hat{\mathcal O}]&=&\int_{-\infty}^{+\infty}e^{-i py}\,\left\langle x+\frac{y}{2}\left|\hat{\mathcal O}\right|x-\frac{y}{2}\right\rangle\,d y.
\end{eqnarray}

\subsection{Generalized Laguerre solutions and their Wigner's functions}%

Using the coordinate transformations, the eigenfunctions of many quantum systems can be expressed in terms of orthogonal polynomials, amongst them we can recover a class of quantum potentials deduced from the generalized Laguerre polynomials \cite{17}.\\
\indent As an example of application, we will focus our attention on the three-dimensional harmonic oscillator where the effective potential $V_{\rm eff}(x)$, the energy eigenvalues $E_n$ and the normalized eigenfunctions $\psi_n(x)$ are given by
\begin{eqnarray}
  V_{\rm eff}(x) &=& -\left(l+\frac{3}{2}\right)+\frac{1}{2}\mu^2(x)+\frac{l(l+1)}{2\,\mu^2(x)}
                 +\frac{1}{8m(x)}\left[\frac{m''(x)}{m(x)}-\frac{7}{4}\left(\frac{m'(x)}{m(x)}\right)^2\right],\label{5.4} \\
  E_n &=& 2n,\label{5.5} \\
  \psi_n(x) &=& \sqrt{\frac{a\,n!}{\Gamma\left(n+l+\frac{3}{2}\right)}}\,m^{1/4}(x)\,\mu^{l+1}(x)\,
                e^{-\mu^2(x)/2}\,L_n^{(l+1/2)}\left(\mu^2(x)\right),\label{5.6}
\end{eqnarray}
where $l$ is the angular momentum quantum number $\left(l\neq-\frac{3}{2},-\frac{5}{2},-\frac{7}{2},\cdots\right)$ and $L_n^{(s)}(\cdot)$ are the generalized Laguerre polynomials. The profile of the mass function is given by
\[
 m(x)=e^{-a|x|}\quad\Rightarrow\quad\mu(x):= \int^x\sqrt{m(\eta)}\,d\eta=
                                           \left\{
                                             \begin{array}{ll}
                                               -\frac{2}{a}\,e^{-a x/2}, & \hbox{($x > 0)$;} \\
                                               +\frac{2}{a}\,e^{+a x/2}, & \hbox{($x < 0)$.}
                                             \end{array}
                                           \right.
\]
with $a(\neq0)$ being the inverse quantum-well width. It is obvious that the case $a=0$ reduces the system to a constant mass, i.e., $m(x) = 1$.\\
\indent In the sequel, we will frequently work under the special choice of canonical transformations $x\rightarrow\mu(x)$ and $p\rightarrow\pi(x,p)=p/\sqrt{m(x)}$ adapted for better describing a problem endowed with PDEM. Then using \eqref{5.1}, WDF corresponding to the eigenfunctions \eqref{5.6} is given by \cite{3}
\begin{eqnarray}\label{5.7}
  \mathcal W\left(\psi_n\big|x,p\right) &=& \frac{2\,n!}{\pi\,\Gamma\left(n+l+\frac{3}{2}\right)}\sum_{l_1=0}^n\sum_{l_2=0}^n\frac{(-1)^{l_1+l_2}}{l_1!\,l_2!}{n+l+\frac{1}{2}\choose n-l_1}{n+l+\frac{1}{2}\choose n-l_2}\nonumber \\
                                     &&\times\left(\mu^2(x)\right)^{l+l_1+l_2+3/2}\,K_{l_1-l_2-2ip/a}\left(\mu^2(x)\right).
\end{eqnarray}
where $K_\delta(\cdot)$ are the modified Bessel functions of the third kind (MacDonald's functions).
\subsection{Applications of \eqref{3.7}: computations of expectation values}%
It is known that the Weyl transformation (WT), defined in \eqref{5.3}, quantizes classical coordinates $(\mu,\pi)$ to its corresponding quantum operators $(\hat\mu,\hat\pi)$ \cite{15,16}. Quantum canonical transformations, which preserve the Dirac brackets, are regarded as suitable transforms to find such a correspondence and in our case they are given by \cite{3,18}
\[
\hat x\rightarrow\mu(\hat x),\qquad{\rm and}\qquad \hat p\rightarrow\pi(\hat x,\hat p)=\frac{1}{\sqrt{m(\hat x)}}\,\hat p.
\]

\indent Before computing the appropriate expectation values $\langle\mu^q\rangle$ and $\langle\pi^q\rangle$, ($q=1,2$), for operators $\mu^q(\hat x)$ and $\pi^q(\hat x,\hat p)$, we first calculate the Weyl transformations $\mathfrak W[\mu^q(\hat x)]$ and $\mathfrak W[\pi^q(\hat x,\hat p)]$ using \eqref{5.3} and we find\footnote{See, for instance, the proof of \eqref{5.10} in the appendix A of \cite{3}.}:
\begin{eqnarray}
  \mathfrak W[\mu^q(\hat x)]        &=& \mu^q(x), \label{5.8} \\
  \mathfrak W[\pi(\hat x,\hat p)]   &=& \frac{2}{a}\,\mu^{-1}(x)\,p+\frac{i}{2}\,\mu^{-1}(x),\label{5.9} \\
  \mathfrak W[\pi^2(\hat x,\hat p)] &=& \frac{4}{a^2}\,\mu^{-2}(x)\,p^2+\frac{2i}{a}\,\mu^{-2}(x)\,p. \label{5.10}
\end{eqnarray}
\indent Using \eqref{5.7}-\eqref{5.10}, the appropriate expectation values can be computed by means of \eqref{5.2} in terms of the auxiliary mass function $\mu(x)$, which lead to
\begin{eqnarray}
 \langle\mu^q\rangle &\equiv& \frac{1}{a}\,\Sigma_{n,l}\,\int_{-\infty}^{+\infty}d p\int_{0}^{+\infty}
     \kappa^{l+l_1+l_2+\frac{q+1}{2}}K_{l_1-l_2-\frac{2ip}{a}}(\kappa)\,d\kappa,  \label{5.11} \\
 \langle\pi\rangle &\equiv& \frac{2}{a^2}\,\Sigma_{n,l}
            \int_{-\infty}^{+\infty}p\,dp\int_{0}^{+\infty}\kappa^{l+l_1+l_2}K_{l_1-l_2-\frac{2i p}{a}}(\kappa)\,d\kappa\nonumber \\
            &&+\frac{i}{2a}\,\Sigma_{n,l}
                             \int_{-\infty}^{+\infty}d p\int_{0}^{+\infty}\kappa^{l+l_1+l_2}K_{l_1-l_2-\frac{2i p}{a}}(\kappa)\,d\kappa, \label{5.12} \\
 \langle\pi^2\rangle &\equiv& \frac{4}{a^3}\,\Sigma_{n,l}
            \int_{-\infty}^{+\infty}p^2\,d p\int_{0}^{+\infty}\kappa^{l+l_1+l_2-\frac{1}{2}}K_{l_1-l_2-\frac{2\,i p}{a}}(\kappa)\,d\kappa\nonumber \\
                     &&+\frac{2i}{a^2}\,\Sigma_{n,l}
                             \int_{-\infty}^{+\infty}p\,d p\int_{0}^{+\infty}\kappa^{l+l_1+l_2-\frac{1}{2}}K_{l_1-l_2-\frac{2i p}{a}}(\kappa)\,d\kappa, \label{5.13}
\end{eqnarray}
where $\kappa\equiv\mu^2(x)$ and
\begin{equation*}
\begin{split}
\Sigma_{n,l} &= \frac{2}{\pi}\sum_{l_1=0}^n\sum_{l_2=0}^n\gamma_{n,l,l_1,l_2}\\
             &=\frac{2}{\pi}\frac{n!}{\Gamma\left(n+l+\frac{3}{2}\right)}
  \sum_{l_1=0}^n\sum_{l_2=0}^n\frac{(-1)^{l_1+l_2}}{l_1!\,l_2!}\,{n+l+\frac{1}{2}\choose n-l_1}\,{n+l+\frac{1}{2}\choose n-l_2}.
\end{split}
\end{equation*}

\indent Performing $\kappa$-integration in \eqref{5.11}-\eqref{5.13} using the identity {\bf 6.561} (16) of \cite{4}, namely
\[
\int_0^{+\infty}\kappa^\gamma\,K_\delta(\kappa)\,d\kappa =2^{\gamma-1}\Gamma\left(\frac{1+\gamma+\delta}{2}\right)\Gamma\left(\frac{1+\gamma-\delta}{2}\right), \qquad(\Re(\gamma+1\pm\delta)>0)
\]

\noindent we obtain
\begin{eqnarray}
  \langle\mu^q\rangle &=& 2^{\frac{q-1}{2}}\,\Sigma_{n,l}\,2^{l+l_1+l_2}\int_{-\infty}^{+\infty}
     \Gamma\left(\sigma^{(q)}_1-i s\right)\Gamma\left(\sigma^{(q)}_2+i s\right)\,d s, \label{5.14} \\
  \langle\pi\rangle   &=& \Sigma_{n,l}\,2^{l+l_1+l_2}\int_{-\infty}^{+\infty}s\,
     \Gamma\left(\varsigma_1-i s\right)\Gamma\left(\varsigma_2+i s\right)\,d s\nonumber \\
     &&+i\,\Sigma_{n,l}\,2^{l+l_1+l_2-2}\int_{-\infty}^{+\infty}\,
     \Gamma\left(\varsigma_1-i s\right)\Gamma\left(\varsigma_2+i s\right)\,d s,  \label{5.15} \\
  \langle\pi^2\rangle &=& \Sigma_{n,l}\,2^{l+l_1+l_2+\frac{1}{2}}\int_{-\infty}^{+\infty}s^2\,
     \Gamma\left(\zeta_1-i s\right)\Gamma\left(\zeta_2+i s\right)\,d s\nonumber \\
     &&+i\,\Sigma_{n,l}\,2^{l+l_1+l_2-\frac{1}{2}}\int_{-\infty}^{+\infty}s\,
     \Gamma\left(\zeta_1-i s\right)\Gamma\left(\zeta_2+i s\right)\,d s, \label{5.16}
\end{eqnarray}
where $s=p/a$ and the parameters $\sigma^{(q)}_{1,2}$, $\varsigma_{1,2}$ and $\zeta_{1,2}$ are defined in the following
\begin{eqnarray*}
\sigma^{(q)}_{1,2} &=& \frac{l}{2}+l_{1,2}+\frac{q+3}{4}, \qquad(q=1,2) \\
\varsigma_{1,2}    &=& \frac{l}{2}+l_{1,2}+\frac{1}{2},\\
\zeta_{1,2}        &=& \frac{l}{2}+l_{1,2}+\frac{1}{4}.
\end{eqnarray*}
\indent The integrals just mentioned can be directly evaluated by applying, for example, \eqref{3.9} to compute \eqref{5.14} and the second integral of \eqref{5.15} for $m=0$, and we can use \eqref{3.7} to compute the first integrals in \eqref{5.15} and \eqref{5.16} for $m=1,2$. Using Table~1 for odd and even $m$ up to 2, the expectation values are finally given through the expressions \cite{3}
\begin{eqnarray}
  \langle\mu^q\rangle_{n,l} &=& \sum_{l_1=0}^n\sum_{l_2=0}^n \gamma_{n,l,l_1,l_2}\,\Gamma\left(l+l_1+l_2+\frac{q+3}{2}\right)
                             ,\quad(q=1,2)  \label{5.17} \\
  \langle\pi\rangle_{n,l}   &=& -i\sum_{l_1=0}^n\sum_{l_2=0}^n \gamma_{n,l,l_1,l_2}\,\left(l_1-l_2-\frac{1}{2}\right)\,\Gamma\left(l+l_1+l_2+1\right), \label{5.18} \\
  \langle\pi^2\rangle_{n,l} &=& \sum_{l_1=0}^n\sum_{l_2=0}^n \gamma_{n,l,l_1,l_2}\,\left(l+\frac{1}{2}-l_1^2-l_2^2+2l_1l_2+2l_1\right)\Gamma\left(l+l_1+l_2+\frac{1}{2}\right), \label{5.19}
\end{eqnarray}
which depend only on quantum numbers $n$ and $l$.\\
\indent Finally we can use \eqref{5.17}-\eqref{5.19} to evaluate the spread in position and momentum, defined by $(\Delta\Theta)_{n,l}:=\sqrt{\langle\Theta^2\rangle_{n,l}-\langle\Theta\rangle^2_{n,l}}$,\,($\Theta=\mu,\pi$), in order to verify the universality of the Heisenberg uncertainty principle $(\Delta\mu)_{n,l}\cdot(\Delta\pi)_{n,l}$. In this context, we have made interesting observations in a previous paper \cite{3} consisting on the occurrence of a common pattern of the type
\[
\lim_{l\rightarrow\infty}(\Delta\mu)_{n,l}\cdot(\Delta\pi)_{n,l}\sim n+\frac{1}{2},
\]
which

\begin{itemize}
  \item provides for the measurement of {\it observables at lower bound} and suggesting that $\frac{1}{2}<(\Delta\mu)_{0,l}\cdot(\Delta\pi)_{0,l}<\frac{3}{4}$ is almost at its minimum for the ground-state level ($n=0$), and also
  \item establishes the {\it quantum-classical connection} for large values of $l$, ($l\rightarrow\infty$).
\end{itemize}

\section{Conclusion}%

We have shown that the Fourier integral containing a pair of Euler complex gamma functions with a monomial can be used to prove many usual integrals and generate many other unknown identities that are found to be very useful in mathematical as well in physical applications. Hence, we believe that the result in \eqref{2.3} is of some interest for researches in mathematical physics, as well as in other fields of applied science.\\
\indent The readers can remark that the only disadvantage in using \eqref{2.3} may lie in the computation of the special summation, due to Fa\`a di Bruno's formula, which cannot be conveniently used, specially for large values of $m$. This remark is due to the fact that we are constrained to counting all different possible combinations of $i_k$ satisfying equations \eqref{2.4} and \eqref{2.5}, unless using a computer programming to evaluate a such summation for large values of $m$.


\pdfbookmark[1]{References}{ref}
\LastPageEnding

\end{document}